\newif\ifanonymous
\lstdefinelanguage{ML}{
  alsoletter={*},
  morekeywords={datatype, of, if, *},
  sensitive=true,
  morecomment=[s]{/*}{*/},
  morestring=[b]"
}
\lstdefinelanguage{scala}{
  alsoletter={@=>},
  morekeywords={nothing, abstract, case, catch, choose, class, def, do, else, extends, final, finally, for, if, implicit, import, match, new, null, object, let,
override, package, private, protected, requires, return, sealed, super, this, throw, trait, try, type, val, var, while, yield, domain, template, res, time,
postcondition, precondition,invariant, constraint, assert, each, _, return, @generator, ensure, require, ensuring, assuming, otherwise, asserting}
  sensitive=true,
  morecomment=[l]{//},
  morecomment=[s]{/*}{*/},
  morestring=[b]"
}
\newcommand{\codestyle}{\small\sffamily}
\newcommand{\circq}{\mbox{$\bigcirc$\kern-0.74em{\texttt{?}}\hspace{0.3em}}}
\newcommand{\nonterm}[1]{\ensuremath{\texttt{N}_{\texttt{#1}}}}
\newcommand*{\ruleset}[1]{
	\begin{center}\begin{tabular}{lcl}
			\ruleScan#1\relax\relax
		}
		\newcommand{\ruleScan}[2]{
			\ifx\relax#1
		\end{tabular}\end{center}
		\else
		\nonterm{#1} & $\rightarrow$ & #2\\%
		\expandafter\ruleScan
		\fi
	}
\newcommand{\set}[1]{\{{#1}\}}
\newcommand{\gram}{G}
\newcommand{\NTerm}{N}
\newcommand{\Term}{\Sigma}
\newcommand{\Prod}{R}
\newcommand{\Start}{S}
\newcommand{\rhs}{\mathit{rhs}}
\newcommand{\nterm}{A}
\newcommand{\mkrule}[2]{{#1} \rightarrow {#2}}
\newcommand{\rul}{r}
\newcommand{\west}[1]{\pi({#1})}
\newcommand{\east}[1]{\overline{\pi}({#1})}
\newcommand{\closing}[1]{\overline{#1}}
\newcommand{\final}{\bot}
\newcommand{\getgraph}{{\sf graph}}
\newcommand{\optimalset}{\Phi}
\newcommand{\restrict}[2]{{#1}_{|{#2}}}
\newcommand{\Lin}[1]{\ensuremath{\text{Lin}(#1)}}
\newtheorem{theorem}{Theorem}
\newtheorem{lemma}{Lemma}
\def\@seccntformat#1{\@ifundefined{#1@cntformat}
   {\csname the#1\endcsname\quad}  
   {\csname #1@cntformat\endcsname}
}
\let\oldappendix\appendix 
\renewcommand\appendix{
    \oldappendix
    \newcommand{\section@cntformat}{\appendixname~\thesection\quad}
}
\begin{document}

\setlength{\pdfpageheight}{\paperheight}
\setlength{\pdfpagewidth}{\paperwidth}

\newcommand{\ourtitle}{Optimal Test Sets for Context-Free Languages}
\titlebanner{}        

\title{\ourtitle}
\subtitle{}

\ifanonymous
\authorinfo{}
          {}
          {}
\else
\authorinfo{Mika\"el Mayer}
          {EPFL}
          {mikael.mayer@epfl.ch}
\authorinfo{Jad Hamza}
          {EPFL/INRIA}
          {jad.hamza@epfl.ch}
\fi

\maketitle

\begin{abstract}

A test set for a formal language (set of strings) L is a subset T of L such that for any two string homomorphisms f and g defined on L, if the restrictions of f and g on T are identical functions, then f and g are identical on the entire L. Previously, it was shown that there are context-free grammars for which smallest test sets are cubic in the size of the grammar, which gives a lower bound on tests set size. Existing upper bounds were higher degree polynomials; we here give the first algorithm to compute test sets of cubic size for all context-free grammars, settling the gap between the upper and lower bound.

\end{abstract} \keywords
test sets, context-free languages, context-free grammars

\section{Introduction}

It is known that given a context-free language $L$ (given by a context-free
grammar $G$ of size $n$), one can construct a test set $T$ for $L$
whose size is $O(n^6)$~
\cite{plandowski_testset_1994,plandowski_testset_1995,plandowski_survey_2003}.

Moreover, it was shown~\cite{plandowski_testset_1994,plandowski_testset_1995,plandowski_survey_2003} that $O(n^3)$ is a lower bound, in the sense that there exists an infinite
family of context-free grammars $G_1,G_2,\dots$, such that the size of $G_n$ is
$O(n)$ and the number of words contained in $G_n$ is 
$O(n^3)$ but $G_n$ does not contain a test set $T$ as a strict subset.
The only test set for $G_n$ is $G_n$.

Our contribution is to prove that the $O(n^3)$ bound is in fact tight.
More specifically, we give an algorithm that given a context-free grammar
$G$ of size $n$, produces a test set $T$ whose size is $O(n^3)$.
We thus greatly improve the original $O(n^6)$ upper bound~
\cite{plandowski_testset_1994,plandowski_testset_1995,plandowski_survey_2003}. \section{Notations and Definitions}

\subsection{Grammars}

A \emph{context-free grammar}
  $\gram$ is a tuple $(\NTerm,\Term,\Prod,\Start)$ where:
\begin{itemize}
\item $\NTerm$ is a set of \emph{non-terminals},
\item $\Term$ is a set of \emph{terminals},
\item $\Prod \subseteq \NTerm \times (\NTerm \uplus \Term)^* $ 
  is a set of \emph{production rules},
\item $\Start \in \NTerm$ is the starting non-terminal symbol.
\end{itemize}

A production $(\nterm,\rhs) \in \Prod$ is denoted $\mkrule{\nterm}{\rhs}$.
The \emph{size} of $\gram$, denoted $|\gram|$, is the sum of sizes of
each production in $\Prod$:
    $\sum_{\mkrule{\nterm}{\rhs} \in \Prod} (|\rhs| + 1)$.

By an abuse of notation, we denote by $\gram$ the set of words
produced by $\gram$.

A grammar is \emph{linear} if for every for every production 
$\mkrule{\nterm}{\rhs} \in \Prod$, the $\rhs$ string contains at most one occurrence
from $\NTerm$.

\subsection{Morphisms and Test Sets}

Given a (partial) function from 
$f: A \to B$, and a set $C$, 
$\restrict{f}{C}$ denotes the (partial) function 
$g: A \cap C \to B$ such that $g(a) = f(a)$ for all
$a \in A \cap C$.

A morphism $f: \Sigma^* \rightarrow \Gamma^*$ is a function
such that $f(\epsilon) = \epsilon$ and for every
$u,v \in \Sigma^*$, $f(u \cdot v) = f(u) \cdot f(v)$,
where the symbol `$\cdot$' denotes the concatenation of words.

A subset $T \subseteq L$ of a language $L$ is a \emph{test set} if for 
any two morphisms $f,g: \Sigma^* \rightarrow \Gamma^*$, 
$\restrict{f}{T} = \restrict{g}{T}$ implies 
$\restrict{f}{L} = \restrict{g}{L}$.

 \section{Test Sets for Context-Free Languages}

\label{sec:testsetscontextfree}
\label{section:cftestset}

\subsection{Plandowski's Test Set}

The following lemma was originally used~\cite{plandowski_testset_1994,plandowski_testset_1995} to show that, 
for any 
linear context-free grammar, there exists a test set containing
at most $O(|\Prod|^6)$ elements.
We show in Section~\ref{subsection:linear}
how this lemma can be used to show a $2|\Prod|^3$ bound.

Let $\Sigma_4 = 
    \set{
        a_i, 
        \closing{a_i},
        b_i,
        \closing{b_i}\ | i \in \set{1,2,3,4}}$
be an alphabet.
We define:
\begin{flalign*}
L_4 = 
    \{&x_4 \, x_3\, x_2 \, x_1 \, 
        \closing{x_1} \, 
        \closing{x_2} \, 
        \closing{x_3} \, 
        \closing{x_4}\ |\ \\
        &\forall i \in \set{1,2,3,4}.\ 
            (x_i,\closing{x_i}) = (a_i,\closing{a_i}) \lor
            (x_i,\closing{x_i}) = (b_i,\closing{b_i})
\}
\end{flalign*}
and 
$T_4 = L_4 \setminus \set{b_4 \,b_3\,b_2\,b_1\,
    \closing{b_1}\,\closing{b_2}\,\closing{b_3}\,\closing{b_4}}$.

The sets $L_4, T_4 \subseteq \Sigma_4$ 
    have $16$ and $15$ elements respectively.

\begin{lemma}[\cite{plandowski_testset_1994,plandowski_testset_1995}]
\label{lemma:t4l4}
$T_4$ is a test set for $L_4$.
\end{lemma}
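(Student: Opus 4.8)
The plan is to establish the equivalent statement that any two morphisms $f,g\colon\Sigma_4^*\to\Gamma^*$ which agree on every string of $T_4$ must also agree on the unique string $w_0=b_4b_3b_2b_1\closing{b_1}\closing{b_2}\closing{b_3}\closing{b_4}$ of $L_4\setminus T_4$; since $L_4=T_4\cup\set{w_0}$, this yields $\restrict{f}{L_4}=\restrict{g}{L_4}$. It is convenient to index $L_4$ by subsets: for $S\subseteq\set{1,2,3,4}$ let $w_S$ be the element that uses the pair $a_i,\closing{a_i}$ at the levels $i\in S$ and $b_i,\closing{b_i}$ at the levels $i\notin S$, so that $T_4=\set{w_S\mid S\ne\emptyset}$ and $w_0=w_\emptyset$. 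Write $A_i,\closing{A_i},B_i,\closing{B_i}$ for the $f$-images of $a_i,\closing{a_i},b_i,\closing{b_i}$, and $A_i',\dots$ for the $g$-images.

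The first step is to pin down lengths. Taking the length of each equation $f(w_S)=g(w_S)$ for $S=\set{1,2,3,4}$ and for $S=\set{1,2,3,4}\setminus\set{i}$, $i=1,\dots,4$ -- all five strings lie in $T_4$ -- and forming the linear combination that cancels every $A$-contribution, one obtains $\sum_i(|B_i|+|\closing{B_i}|)=\sum_i(|B_i'|+|\closing{B_i'}|)$, that is $|f(w_\emptyset)|=|g(w_\emptyset)|$. It therefore remains to show that these two particular eight-fold products, being of equal length, must in fact be the same word.

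The core of the proof is a word-combinatorial analysis of the two factorizations of this common length,
\[
 f(w_\emptyset)=B_4\cdot B_3\cdot B_2\cdot B_1\cdot \closing{B_1}\cdot \closing{B_2}\cdot \closing{B_3}\cdot \closing{B_4}
\]
against the primed eight-block factorization of the same word. I would assume a leftmost mismatch at some position $j$ and then ``toggle'' levels from $b$ to $a$, i.e.\ pass from $w_\emptyset$ to some $w_S$ with $S\ne\emptyset$ (whose equation we are allowed to use), in a way that cannot repair the mismatch near $j$, contradicting $f(w_S)=g(w_S)$. Concretely this means peeling off matching outermost blocks with Levi's lemma; the nested bracket shape of the strings is exactly what makes the peeling symmetric (toggling a level replaces one block on the left and its mirror block on the right, and nothing else), and whenever the two factorizations refuse to align one extracts a nontrivial common period, invokes the Fine--Wilf theorem, and contradicts the resulting periodicity with a further $w_S$.

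I expect the misalignment bookkeeping of this last step to be the real obstacle: since toggling a level perturbs a left block and its mirror right block simultaneously, one cannot adjust one side without disturbing the other, so the case analysis must be organized carefully -- say by the relative order of the ``midpoints'' and of the corresponding left/right block boundaries of the two factorizations -- closing the aligned cases by plain cancellation and the misaligned ones by periodicity. The depth $4$ here, as opposed to a smaller depth where the analogous claim already fails (already at depth $2$ one can build $f,g$ agreeing on all mixed strings but differing on the all-$b$ string), should be precisely what supplies enough toggling room for the argument to close.
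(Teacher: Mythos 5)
First, a point of comparison: the paper does not prove Lemma~\ref{lemma:t4l4} at all --- it imports the statement from Plandowski's work and builds everything else on top of it --- so there is no internal proof to measure your attempt against; it has to stand on its own. The parts you actually carry out are correct. The reduction to the single word $w_\emptyset = b_4\,b_3\,b_2\,b_1\,\closing{b_1}\,\closing{b_2}\,\closing{b_3}\,\closing{b_4}$ is immediate because $L_4 \setminus T_4$ is a singleton, and the length computation is sound: summing the four equations $|f(w_{\set{1,2,3,4}\setminus\set{i}})| = |g(w_{\set{1,2,3,4}\setminus\set{i}})|$ and subtracting three times $|f(w_{\set{1,2,3,4}})| = |g(w_{\set{1,2,3,4}})|$ cancels all the $A$-contributions and yields $|f(w_\emptyset)| = |g(w_\emptyset)|$.

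The gap is that the entire combinatorial core is deferred. You name the right toolkit (leftmost mismatch, peeling blocks with Levi's lemma, extracting periods and applying Fine--Wilf, a case analysis organized by the relative positions of block boundaries and midpoints), but you then explicitly set aside ``the misalignment bookkeeping,'' and that bookkeeping is not a routine verification --- it is where all of the content of this lemma lives, and in the original papers it occupies a substantial case analysis. A concrete symptom that an essential idea is still missing: nothing in your sketch visibly uses all four levels. As you note yourself, the analogous claim fails at smaller depth, so any correct argument must contain a step that genuinely breaks when run on the depth-$2$ or depth-$3$ variants; your plan as written does not identify that step, which means the same outline would appear to ``prove'' a false statement. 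The difficulty you flag is real --- toggling level $i$ replaces $B_i$ on the left and $\closing{B_i}$ on the right simultaneously, and the induced length changes shift every boundary between them, so one cannot localize the perturbation near the mismatch position --- but flagging it is not the same as resolving it. Until the toggling argument is carried through and the indispensable role of the fourth level is exhibited, this is an outline of a proof strategy, not a proof.
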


\subsection{Linear Context-Free Grammars}
\label{subsection:linear}

We now prove that for any context-free grammar $G$,
there exists a test set whose size is $2|\Prod|^3$.
Like the original proof of
\cite{plandowski_testset_1994,plandowski_testset_1995} that gave a 
$O(|\Prod|^6)$ upper bound, our proof 
relies on Lemma~\ref{lemma:t4l4}. However, our proof uses
a different construction to obtain the new, tight, bound.

\begin{theorem}
\label{theorem:cftestset}
Let $\gram = (\NTerm,\Term,\Prod,\Start)$ be a linear context-free grammar. 
There exists a test set $T \subseteq \gram$ for $\gram$ containing 
at most $2|\Prod|^3$ elements.
\end{theorem}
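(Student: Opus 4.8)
The plan is to reduce the problem for an arbitrary linear grammar $\gram$ to finitely many instances of Lemma~\ref{lemma:t4l4}, one for each ``interesting'' triple of productions. The key observation is that, because $\gram$ is linear, every word $w \in \gram$ is generated by a derivation that can be described by a \emph{path} in the grammar: a sequence of production applications $\nterm_0 \rightarrow u_1 \nterm_1 v_1$, $\nterm_1 \rightarrow u_2 \nterm_2 v_2$, \dots, ending in a terminal-only production. Such a derivation decomposes $w$ as $u_1 u_2 \cdots u_k \, (\text{terminal part}) \, v_k \cdots v_2 v_1$. I would first set up this ``west/east'' bookkeeping (the macros $\west{\cdot}$, $\east{\cdot}$ suggest exactly this): associate to each production used along the path its left contribution and right contribution, so a derivation path becomes a word over the alphabet of productions, and the yield is obtained by applying a fixed morphism that sends each production symbol to its (west, east) pair.

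Next I would argue that to distinguish two morphisms $f,g$ on $\gram$ it suffices to distinguish them on words arising from paths that traverse at most three ``decision points'' — places where two different productions with the same left-hand non-terminal could be chosen. The reason $3$ is the right number is precisely that $L_4$ lives over an alphabet indexed by $\set{1,2,3,4}$ with a nesting structure $x_4 x_3 x_2 x_1 \closing{x_1}\closing{x_2}\closing{x_3}\closing{x_4}$: the four ``layers'' correspond to the innermost terminal production plus three binary choices of productions nested around it, and the bar/non-bar alternatives $(a_i,\closing{a_i})$ vs.\ $(b_i,\closing{b_i})$ correspond to the two rule alternatives at each choice point. Concretely, I would fix, for each unordered triple of productions $r_1, r_2, r_3$ (and each compatible ``anchor'' terminal production), a substitution from $\Sigma_4$ into $\gram$ sending $a_i, \closing{a_i}$ to the west/east contributions of one alternative and $b_i, \closing{b_i}$ to the other, composed with the canonical derivation morphism so that the image of $L_4$ lands inside $\gram$. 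Pulling back the test set $T_4$ through this substitution gives $15$ words of $\gram$ per triple (one word of $L_4$, the all-$b$ word, is dropped); taking the union over all $O(|\Prod|^3)$ triples yields a candidate test set of size $O(|\Prod|^3)$, and a careful count of which triples are actually needed should bring the constant down to $2$.

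It then remains to verify the test-set property. Suppose $f,g$ agree on the constructed set $T$. For any word $w \in \gram$, pick a derivation path for $w$; I want to show $f(w) = g(w)$. If the path has at most three decision points, $w$ is (up to the fixed anchoring) one of the images of an element of $L_4 \setminus \{\text{all-}b\}$ — or else $w$ is the all-$b$ image, which by Lemma~\ref{lemma:t4l4} is still forced, since $T_4$ is a test set for $L_4$ and morphisms compose with the substitution. If the path is longer, I would induct: two consecutive decision points can be ``fused'' by the linear structure into an equivalent shorter configuration, or one shows that agreement on all length-$\le 3$ configurations propagates along the path by concatenativity of morphisms. The main obstacle, I expect, is exactly this last propagation/gluing argument: making precise that local agreement on every window of three choice points, plus agreement on the terminal productions themselves, forces global agreement on arbitrarily long derivations — i.e., that three is genuinely enough and the $L_4$ gadget can be ``chained.'' This is where the linearity hypothesis is essential (it keeps the derivation one-dimensional so that ``windows of size three'' makes sense), and where the matching lower-bound family shows three cannot be reduced to two.
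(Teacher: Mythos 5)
You have the right skeleton --- linearity turns a derivation into a path with a $\west{\cdot}$/$\east{\cdot}$ decomposition whose nesting matches $x_4\,x_3\,x_2\,x_1\,\closing{x_1}\,\closing{x_2}\,\closing{x_3}\,\closing{x_4}$, and the reduction target is Lemma~\ref{lemma:t4l4} --- but the proposal is missing the one idea that makes both the size bound and the ``gluing'' step work, and you flag that missing step yourself. The paper does not count ``decision points'' in your sense (non-terminals admitting several rules): in a grammar where every non-terminal has two rules, every edge of the path is such a decision point, so restricting to paths with at most three of them cannot yield a test set; conversely, the set of contexts surrounding a fixed triple of rules is infinite (cycles), so ``$15$ words per triple'' is not well-defined and does not give $2|\Prod|^3$. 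What the paper does instead is fix a well-founded total order on paths (length first, then lexicographically via a fixed order on $\Prod$), call a path \emph{optimal} if it is the minimum path between its endpoints (hence unique), and canonically decompose every accepting path as $P_1 e_1 P_2 \cdots P_n e_n P_{n+1}$ where each $P_i$ is optimal and each $P_i e_i$ is not. The test set $\optimalset_3(\gram)$ consists of the words whose path has $n \le 3$ such deviations; since such a path is determined by $(e_1,\dots,e_n)$ alone, the set has at most $1+|\Prod|+|\Prod|^2+|\Prod|^3 \le 2|\Prod|^3$ elements.

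The propagation you identify as ``the main obstacle'' is then handled not by fusing or sliding windows but by a minimal-counterexample argument: take the $<$-least path $P$ whose word separates $f$ and $g$; it must decompose with $n \ge 4$, so write $P = P_1 e_1 P_2 e_2 P_3 e_3 P_4 e_4 W_5$ and let $Q_i$ be the optimal path with the same endpoints as $P_i e_i$ (resp.\ as $P_4 e_4 W_5$). Replacing any nonempty subset of these four blocks by the corresponding $Q_i$ yields $15$ strictly smaller paths, on whose words $f$ and $g$ must already agree by minimality of $P$; setting $f'(a_i)=f(\west{Q_i})$, $f'(\closing{a_i})=f(\east{Q_i})$, $f'(b_i)=f(\west{P_i e_i})$, and so on, produces morphisms $f',g'$ that agree on $T_4$ yet differ on the all-$b$ word of $L_4$, contradicting Lemma~\ref{lemma:t4l4}. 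Without the order and the uniqueness of optimal subpaths you have neither the cardinality bound nor a terminating induction, so as written the proposal has a genuine gap rather than an alternative route; the $L_4$ gadget is used once, on the minimal counterexample, not chained along the path.
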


\begin{proof}

Before building the test set, we introduce some notation.

\paragraph{Graph of $\gram$.}

Define the labeled graph $\getgraph(\gram) = (V,E)$ where 
$V = \NTerm \uplus \set{\final}$, 
$E \subseteq V \times \Prod \times V$ such that:
\begin{itemize}
\item 
  for non-terminals $A,B \in \NTerm$ and a rule $\rul \in \Prod$,  let
  $(A,\rul,B) \in E$ iff
  $\rul$ is of the form $\mkrule{A}{u B v}$ where $u,v \in \Sigma^*$ (i.e., $B$ is the only non-terminal occurring in
  $\rhs$).
\item 
  for a non-terminal $A \in \NTerm$ and $\rul \in \Prod$, 
  $(A,\rul,\final) \in E$ if and only if
  $\rul = \mkrule{A}{\rhs}$ for some $\rhs \in \Sigma^*$.
\end{itemize}

A \emph{path} of $\getgraph(\gram)$ is a (possibly cyclic) sequence of 
edges of $E$, of the form:
$(A_1,\rul_1,A_2) \cdot
(A_2,\rul_2,A_3) 
\cdots
(A_n,\rul_n,A_{n+1})$.
A path is \emph{accepting} if $A_1 = \Start$ and $A_{n+1} = \final$.

\paragraph{Link between  $\getgraph(\gram)$ and $\gram$.}

Given a rule $\mkrule{A}{u B v} \in \Prod$,
where $A,B \in \NTerm$ and $u,v \in \Term^*$, 
we denote $\west{\rul} = u$ and $\east{\rul} = v$.
For a rule of the form $\mkrule{A}{u}$ where $u \in \Term^*$
we denote $\west{\rul} = u$ and $\east{\rul} = \epsilon$.
For a path $P = 
(A_1,\rul_1,A_2) \cdot
(A_2,\rul_2,A_3) \cdot
\cdots
(A_n,\rul_n,A_{n+1})$ 
we define $\west{P} = \west{\rul_1} \cdots \west{\rul_n}$,
and $\east{P} = \east{\rul_n} \cdots \east{\rul_1}$.

Each accepting path $P$ in $\getgraph(\gram)$ corresponds to a word 
$\west{P} \cdot \east{P}$ in $\gram$, and
conversely, for any word $w \in \gram$, there exists an
accepting path (not necessarily unique) in $\getgraph(\gram)$
corresponding to $w$.

\paragraph{Total order on paths.}

We fix an arbitrary total order $<$ on $\Prod$, and extend it
to sequence of edges in $\Prod^*$ as follows.
Given paths $P_1,P_2 \in \Prod^*$, we have 
$P_1 < P_2$ iff 
\begin{itemize}
\item $|P_1| < |P_2|$ (length of $P_1$ is smaller than length of $P_2$), or
\item $|P_1| = |P_2|$ and 
  $P_1$ is smaller lexicographically than $P_2$.
\end{itemize}

A path $P$ is called \emph{optimal} if it is the minimal path from the first
vertex of $P$ to the last vertex of $P$.

\paragraph{Test set for $\gram$.}
 
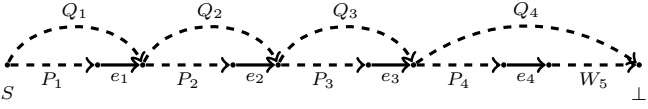
\begin{figure}
    \centering
    \scriptsize
    
    \begin{tikzpicture}[
        pnt/.style={->,circle, fill=black, inner sep=0pt, minimum size=2pt},
        arr/.style={->,dashed, very thick},
        edd/.style={->,very thick},
        inv/.style={draw=none,opacity=0},
        x=0.6cm,
        y=-1cm,
        scale=1
    ]

    \node[pnt] at (0,0) (B0) { };
    \node[pnt] at (2,0) (A1) {  };
    \node[pnt] at (3,0) (B1) {  };
    \node[pnt] at (5,0) (A2) {  };
    \node[pnt] at (6,0) (B2) {  };
    \node[pnt] at (8,0) (A3) {  };
    \node[pnt] at (9,0) (B3) {  };
    \node[pnt] at (11,0) (A4) {  };
    \node[pnt] at (12,0) (B4) {  };
    \node[pnt] at (14,0) (A5) {  };
    
    \path[arr] (B0) edge[bend left=65] node[above] { $Q_1$ } (B1);
    \path[arr] (B1) edge[bend left=65] node[above] { $Q_2$ } (B2);
    \path[arr] (B2) edge[bend left=65] node[above] { $Q_3$ } (B3);
    \path[arr] (B3) edge[bend left=35] node[above] { $Q_4$ } (A5);
    
    \path[arr] (B0) edge[bend left=0] node[below] { $P_1$ } (A1);
    \path[arr] (B1) edge[bend left=0] node[below] { $P_2$ } (A2);
    \path[arr] (B2) edge[bend left=0] node[below] { $P_3$ } (A3);
    \path[arr] (B3) edge[bend left=0] node[below] { $P_4$ } (A4);
    \path[arr] (B4) edge[bend left=0] node[below] { $W_5$ } (A5);

    \path[edd] (A1) edge[bend left=0] node[below] { $e_1$ } (B1);
    \path[edd] (A2) edge[bend left=0] node[below] { $e_2$ } (B2);
    \path[edd] (A3) edge[bend left=0] node[below] { $e_3$ } (B3);
    \path[edd] (A4) edge[bend left=0] node[below] { $e_4$ } (B4);
    
    \node[below=2mm] at (B0) { $\Start$ };
    \node[below=2mm] at (A5) { $\final$ };
    
    \end{tikzpicture}
    \caption{
        The four optimal subpaths $Q_1$, $Q_2$, $Q_3$, and $Q_4$ define
        $15$ alternative paths from $\Start$ to $\final$ which are all
        strictly smaller (with respect to order $<$) than
        $P_1 e_1 P_2 e_2 P_3 e_3 P_4 e_4 W_5$.
    }
    \label{figure:cftestset}
\end{figure}

Let $\optimalset_k(\gram)$ be the set of words of $\gram$ corresponding to accepting
paths of the form
$P_1 e_1 P_2 \cdots P_n e_n P_{n+1}$, $n \leq k$,
with $P_i \in \Prod^*$, $e_i \in \Prod$, and 
where for $i \in \set{1,\dots,n+1}$, 
$P_i$ is optimal,
and for $i \in \set{1,\dots,n}$,
$P_i e_i$ is not optimal.
By construction, a path in $\optimalset_k(\gram)$ is uniquely
determined (when it exists) by the choice 
of edges $e_1,\dots,e_n$, as optimal paths between
two vertices are unique.
Therefore, $\optimalset_k(\gram)$
contains at most $\sum_{i = 0}^k |\Prod|^i \leq 
2|\Prod|^k$ words.

We now show that $\optimalset_3(\gram)$ is a test set for $\gram$
(which gives us the desired bound of the theorem: $2|\Prod|^k$).
Assume there exist two morphisms $f, g: \Term^* \to \Gamma^*$ such that 
$\restrict{f}{\optimalset_3(\gram)} = \restrict{g}{\optimalset_3(\gram)}$ and 
there exists $w \in G$ such that $f(w) \neq g(w)$.

By assumption, $w$ does not belong to $\optimalset_3(\gram)$, and must correspond 
to a path $P = P_1 e_1 P_2 \cdots P_n e_n P_{n+1}$ for $n \geq 4$, such that
for $i \in \set{1,\dots,n+1}$, 
$P_i$ is optimal, and $P_i e_i$ is not optimal.
We pick $w$ having the property $f(w) \neq g(w)$ such that the path $P$ is the 
smallest possible (according to the order $<$ defined above).

The path $P$ can be 
written $P_1 e_1 P_2 e_2 P_3 e_3 P_4 e_4 W_5$ where 
for $i \in \set{1,2,3,4}$, $P_i$ is optimal, and 
$P_i e_i$ is not optimal ($W_5$ is not necessarily optimal).
For $i \in \set{1,2,3}$,
we define $Q_i$ to be the optimal path from the source of
$P_i e_i$ to its target; hence $Q_i < P_i e_i$.
Moreover, $Q_4$ is defined to be the optimal path
from the source of $P_4 e_4 W_5$ to its target, with $Q_4 < P_4 e_4 W_5$.
Effectively, as shown in Figure~\ref{figure:cftestset},
this defines $15$ paths that can be derived from $P$
by replacing subpaths by their corresponding optimal path 
($Q_1$, $Q_2$, $Q_3$, $Q_4$).

Let $P'$ be one of those $15$ paths (where at least one subpath 
has been replaced by its optimal counterpart
$Q_1$, $Q_2$, $Q_3$, or $Q_4$), and let $w' \in \gram$
be the word corresponding to $P'$.
By construction of $P'$, and by definition of the order $<$, 
we have $P' < P$.
Since we have chosen $P$ to be the optimal path such that $f$ and $g$
are not equal on the corresponding word, we deduce that
$f(w') = g(w')$.

To conclude, 
we show that we obtain a contradiction, thanks to 
Lemma~\ref{lemma:t4l4}.
For this, 
we construct two morphisms $f', g': \Sigma_4 \to \Gamma$ as follows
($i$ ranges over $\set{1,2,3,4}$ and $j$ over $\set{1,2,3}$):
\begin{itemize}
\item $f'(a_i) = f(\west{Q_i})$,
\item $f'(\closing{a_i}) = f(\east{Q_i})$,
\item $f'(b_j) = f(\west{P_j e_j})$,
\item $f'(\closing{b_j}) = f(\east{P_j e_j})$.
\item $f'(b_4) = f(\west{P_4 e_4 W_5})$,
\item $f'(\closing{b_4}) = f(\east{P_4 e_4 W_5})$.
\end{itemize}
The morphism $g'$ is defined similarly, using $g$ instead of $f$.
We can then verify that $f'$ and $g'$
coincide on $T_4$, but are not equal on the word 
$b_4 \,b_3\,b_2\,b_1\,
    \closing{b_1}\,\closing{b_2}\,\closing{b_3}\,\closing{b_4}
    \in L_4$, thus contradicting Lemma~\ref{lemma:t4l4}.
\end{proof}

\subsection{Context-Free Grammars}

To obtain a test set for a context-free grammar $\gram$ which
is not necessarily linear, 
\cite{plandowski_testset_1994} constructs from $\gram$ a
linear context-free grammar $\Lin{\gram}$ which produces a subset of $\gram$
which is a test set for $\gram$.

Formally, $\Lin{\gram}$ is derived from $\gram$ as follows:
\begin{itemize}
\item 
    For every productive non-terminal symbol $A$ in \gram{},
    we choose a word $x_A$ that is produced by $A$.
\item 
    Every rule $r: A \to x_0 A_1 x_1 \ldots A_n x_n$ in $\gram$, where for
    every $i$, 
    $x_i \in \Term^*$ and $A_i \in \NTerm$ is productive, 
    is replaced
    by $n$ different rules, each one obtained from $r$ by 
    replacing all $A_i$ with $x_{A_i}$ except one.
\end{itemize}

Note that the definition of \Lin{\gram} is not unique, and depends on the 
choice of the words $x_A$. The following result holds  for any 
choice of the words $x_A$.

\begin{lemma}[\cite{plandowski_testset_1994,plandowski_testset_1995}]
\label{lemma:ling}
$\Lin{\gram}$ is a test set for $\gram$.
\end{lemma}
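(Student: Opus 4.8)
I would follow Plandowski's two-stage argument.

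\emph{Stage 1 (easy): $\Lin{\gram}\subseteq\gram$.} Every rule of $\Lin{\gram}$ arises from a rule $r:A\to x_0A_1x_1\cdots A_nx_n$ of $\gram$ by substituting, in all positions but one, the chosen word $x_{A_i}\in\gram_{A_i}$; so any $\Lin{\gram}$-derivation of a word $w$ can be rewritten as a $\gram$-derivation of the same $w$ by re-expanding the substituted words $x_{A_i}$. A routine induction on derivation length makes this precise. The same bookkeeping shows that every non-terminal productive (and, after trimming, reachable) in $\gram$ is still productive and reachable in $\Lin{\gram}$, in fact reachable in $\Lin{\gram}$ by a derivation $\Start\Rightarrow^{*}_{\Lin{\gram}}u\,A\,v$ with $u,v\in\Term^{*}$ (follow one non-terminal down a derivation tree, replacing its siblings by their $x$-words). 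I fix one such pair $u_A,v_A$ for each non-terminal.

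\emph{Stage 2 (the substance): $\Lin{\gram}$ is a test set for $\gram$.} Let $f,g:\Term^{*}\to\Gamma^{*}$ agree on $\Lin{\gram}$; it suffices to show that $f=g$ on $\gram_A$ for every reachable productive non-terminal $A$ (we may assume $\gram$ trimmed), the case $A=\Start$ being the lemma. Suppose not, and pick a non-terminal $A$ and a word $w\in\gram_A$ with $f(w)\neq g(w)$ whose derivation tree $t$ has minimal size among all such. The root of $t$ uses a rule $r:A\to x_0A_1x_1\cdots A_nx_n$ with children deriving $w_i\in\gram_{A_i}$, so $w=x_0w_1x_1\cdots w_nx_n$; by minimality $f(w_i)=g(w_i)$ for every $i$. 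If $n\le 1$ the rule is already linear, and wrapping $w$ in the context $u_A[\,]v_A$ and filling the single slot by a word of $\Lin{\gram}_{A_1}$ lands one inside $\Lin{\gram}$, which lets equality propagate. If $n\ge 2$, I would derive the contradiction much as in the proof of Theorem~\ref{theorem:cftestset}: the $n$ linearized rules $r_1,\dots,r_n$ — together with the contexts $u_A,v_A$ and the words $x_{A_i}\in\gram_{A_i}$, the latter satisfying $f(x_{A_i})=g(x_{A_i})$ once each $x_A$ is chosen of minimal derivation size — supply enough words of $\Lin{\gram}$, namely those that coincide with $w$ except in one slot, which is reset to its default, to build two morphisms on a bracket alphabet that agree on the appropriate analogue of $T_4$ but differ on the excluded word, contradicting Lemma~\ref{lemma:t4l4}.

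I expect the heart of the difficulty to be this last move: exhibiting an $L_4/T_4$ configuration inside an arbitrary $\gram$-derivation. Three obstacles must be handled. (i) A derivation tree can be deep and heavily branching, whereas $\Lin{\gram}$ only ever exposes one ``spine'' at a time, so it is not enough to look locally at the root rule — one has to descend along (or reorganize the tree around) four nested non-linear rule applications, so that the two choices ``subtree vs.\ default word'' available at each of four levels map, under $f$ and $g$, onto the two options at the four slots of $L_4$. (ii) The sub-words $w_i$ live in $\gram_{A_i}$ but generally not in $\Lin{\gram}_{A_i}$, so no word of $\Lin{\gram}$ literally contains $w_i$; minimality of $t$ (equivalently, induction on strictly smaller sub-derivations) is exactly the lever that lets one replace $w_i$ by its default without losing control. (iii) The equalities $f(x_{A_i})=g(x_{A_i})$ are needed and are immediate only for the minimal-size choice of the $x_A$, so the proof should close by noting that correctness of the construction, and hence the lemma, does not depend on which words $x_A$ are picked. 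Once the configuration is correctly identified, what remains is routine free-monoid cancellation.
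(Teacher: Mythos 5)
First, note that the paper does not prove this lemma at all: it is imported verbatim from Plandowski's work (hence the citation in the lemma header), so there is no in-paper proof to match your argument against; your proposal has to stand or fall on its own.

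It falls, and the failure is concrete. Your Stage~2 replaces the lemma by the stronger claim that if $f$ and $g$ agree on $\Lin{\gram}$ then they agree on $\gram_A$ (the set of words derivable from $A$) for \emph{every} reachable productive non-terminal $A$, and then runs a minimal-counterexample argument over all non-terminals simultaneously. That stronger claim is false. Take $\gram$ with rules $S \rightarrow aAb$, $A \rightarrow c$, $A \rightarrow d$; this grammar is already linear, so $\Lin{\gram} = \gram = \set{acb,\, adb}$ for any choice of the $x_A$. Define $f(a)=0$, $f(b)=\epsilon$, $f(c)=f(d)=1$ and $g(a)=01$, $g(b)=g(c)=g(d)=\epsilon$. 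Then $f$ and $g$ agree on all of $\Lin{\gram}$, yet $f(c)=1\neq\epsilon=g(c)$, so $f\neq g$ on $\gram_A=\set{c,d}$. Consequently your induction cannot close: the minimal counterexample may sit at a non-terminal $A\neq\Start$ where no contradiction exists to be derived. The same example pinpoints the individual step that breaks: you assert that ``wrapping $w$ in the context $u_A[\,]v_A$ \dots lets equality propagate,'' but agreement of two morphisms on $u\,w\,v$ does not imply agreement on $w$ --- equality propagates from parts to wholes, never from a whole to a factor. For the same reason the equalities $f(x_{A_i})=g(x_{A_i})$ that your $n\ge 2$ case relies on are not available, however the $x_A$ are chosen.

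The repair is not cosmetic: the induction has to be carried out over derivation trees of words of $L(\gram)$ itself (so that the surrounding context is always part of the word being compared), with a measure such as the number of maximal non-default subtrees, and the combinatorial core is a lemma about simultaneously replacing several \emph{parallel} occurrences $w_1,\dots,w_n$ by their defaults inside one word of $L(\gram)$ --- a different configuration from the four \emph{nested} slots of $L_4$/$T_4$, which is the tool for the linear case (Theorem~\ref{theorem:cftestset}), not for this reduction. As written, your proposal names the right ingredients ($\Lin{\gram}\subseteq\gram$, default words, one-slot linearizations) but organizes them around an invariant that is false, so the substance of the lemma remains unproved.
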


Using Theorem~\ref{theorem:cftestset}, we improve the 
$O(|G|^6)$ bound of \cite{plandowski_testset_1994,plandowski_testset_1995}
for the test set of $\gram$ to $2|\gram|^3$.

\begin{theorem}
\label{theorem:cftestsetgen}
Let $\gram = (\NTerm,\Term,\Prod,\Start)$ be a context-free grammar. 
There exists a test set $T \subseteq \gram$ for $\gram$ containing 
at most $2|\gram|^3$ elements.
\end{theorem}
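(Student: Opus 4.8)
The plan is to bootstrap the linear case, already settled by Theorem~\ref{theorem:cftestset}, to arbitrary context-free grammars through the linearization $\Lin{\gram}$ and the transitivity of the test-set relation. By construction, $\Lin{\gram}$ is a linear context-free grammar and every word it produces is produced by $\gram$, so $\Lin{\gram} \subseteq \gram$; Lemma~\ref{lemma:ling} moreover tells us that $\Lin{\gram}$ is a test set for $\gram$. Applying Theorem~\ref{theorem:cftestset} to the linear grammar $\Lin{\gram}$ yields a set $T \subseteq \Lin{\gram}$ that is a test set for $\Lin{\gram}$ and has at most $2m^3$ elements, where $m$ denotes the number of production rules of $\Lin{\gram}$.

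I would then record the elementary fact that the test-set relation composes: if $T$ is a test set for a language $L_1$ and $L_1 \subseteq L_2$ is a test set for $L_2$, then $T$ is a test set for $L_2$. Indeed, given morphisms $f,g : \Term^* \to \Gamma^*$ with $\restrict{f}{T} = \restrict{g}{T}$, the first hypothesis gives $\restrict{f}{L_1} = \restrict{g}{L_1}$, and the second then gives $\restrict{f}{L_2} = \restrict{g}{L_2}$; moreover $T \subseteq L_1 \subseteq L_2$. Instantiating with $L_1 = \Lin{\gram}$ and $L_2 = \gram$, the set $T$ produced above is therefore a test set for $\gram$, and it is a subset of $\gram$ as required.

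It remains only to bound $m$ by $|\gram|$. Each rule $r$ of $\gram$, whose right-hand side has some length $\ell_r$ and contains $n_r$ occurrences of productive non-terminals, is turned by the linearization into at most $\max(n_r,1)$ rules of $\Lin{\gram}$ (one per such occurrence, or the rule itself when $n_r = 0$); since $n_r \le \ell_r$ this is at most $\ell_r + 1$, which is exactly the contribution of $r$ to $|\gram|$. Summing over all rules, $m \le |\gram|$, so $|T| \le 2m^3 \le 2|\gram|^3$, which is the claimed bound.

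I do not expect a real obstacle beyond what Theorem~\ref{theorem:cftestset} already provides: the tight cubic bound is the hard part and is done. The only points that need care are checking that the cited linearization results genuinely apply --- that $\Lin{\gram}$ is linear and language-contained in $\gram$, which is immediate from its definition --- and keeping the production count honest, in particular handling rules with zero or one non-terminal so that the inequality $m \le |\gram|$ holds with no slack that would break the $2|\gram|^3$ bound.
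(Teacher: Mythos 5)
Your proposal is correct and follows essentially the same route as the paper: apply Theorem~\ref{theorem:cftestset} to the linear grammar $\Lin{\gram}$, invoke Lemma~\ref{lemma:ling} together with the (implicit in the paper, explicit in your write-up) transitivity of the test-set relation, and bound the number of productions of $\Lin{\gram}$ by $|\gram|$. Your accounting of the rule count, including the edge case of rules with zero or one non-terminal, matches the paper's observation that each rule $\mkrule{\nterm}{\rhs}$ is duplicated at most $|\rhs|$ times.
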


\begin{proof}
Follows from Theorem~\ref{theorem:cftestset},
Lemma~\ref{lemma:ling}, and from the fact that 
$\Lin{\gram}$ has at most
$|\gram| = \sum_{\mkrule{\nterm}{\rhs} \in \Prod} (|\rhs| + 1)$
rules. (When constructing $\Lin{\gram}$, each rule 
$\mkrule{\nterm}{\rhs}$ of $\gram$ is duplicated at most 
$|\rhs|$ times.)
\end{proof}

\subsection{Construction of $\optimalset_3(\gram)$}

To construct $\optimalset_3(\gram)$ for a linear context-free 
grammar $\gram = (\NTerm,\Term,\Prod,\Start)$,
we precompute in time
$O(|\NTerm|^2 |\Prod|)$, for each pair of vertices 
$(A,B)$, the optimal path from $A$ to $B$ in 
$\getgraph(\gram)$.
Then for each possible choice of at most $3$ edges 
$e_1 = (A_1,r_1,B_1)$,
\dots
$e_n = (A_n,r_n,B_n)$,
with $0 \leq n \leq 3$, 
we construct the path 
$P = P_1 e_1 \dots P_n e_n P_{n+1}$ where 
each $P_i$ is the optimal path from $A_{i-1}$ to $B_i$
(if it exists) with $A_0 = \Start$ and $B_{n+1} = \final$
by convention.
We then add 
the word corresponding to $P$ to our result.

To conclude, since the length of each optimal path is bounded by
$|\NTerm|$, we can construct 
$\optimalset_3(\gram)$ in time $O(|\NTerm|  \cdot |\Prod|^3)$.

\section{Acknowledgements}

Thanks to Viktor Kuncak, Mukund Raghothaman, and 
Ravichandhran Madhavan for the helpful talks.

{
\bibliographystyle{abbrv}
\softraggedright
\bibliography{main}

\begin{thebibliography}{1}

\bibitem{plandowski_testset_1994}
W.~Plandowski.
\newblock Testing equivalence of morphisms on context-free languages.
\newblock In {\em European Symposium on Algorithms}, pages 460--470. Springer,
  1994.

\bibitem{plandowski_testset_1995}
W.~Plandowski.
\newblock The complexity of the morphism equivalence problem for context-free
  languages.
\newblock 1995.

\bibitem{plandowski_survey_2003}
W.~Plandowski.
\newblock Test sets for large families of languages.
\newblock In {\em International Conference on Developments in Language Theory},
  pages 75--94. Springer, 2003.

\end{thebibliography}
}

\appendix

\end{document}